\tikzstyle{mybox} = [draw=black, fill=white,  thick,
\tikzstyle{mybox} = [draw=black, fill=white,  thick,
\title{On guarding polygons with holes} 
\titlerunning{} 
\author{Sharareh Alipour}{Institute for Research in Fundamental Sciences, Iran}{Alipour@ipm.ir}{https://orcid.org/0000-0002-3626-8960}{}
\authorrunning{S. Alipour} 
\keywords{Art Gallery Problem, Polygon with Holes, Triangulation} 
\begin{document}

\maketitle

\begin{abstract}
There is an old conjecture by Shermer \cite{sher} that in a polygon with $n$ vertices and $h$ holes, $\lfloor \dfrac{n+h}{3} \rfloor$ vertex guards are sufficient to guard the entire polygon. The conjecture is proved for $h=1$ by Shermer \cite{sher} and Aggarwal \cite{aga} seperately. 
 In this paper, we prove a theorem similar to the Shermer's conjecture for a special case where the goal is to guard the vertices of the polygon (not the entire polygon) which is equivalent to finding a dominating set for the visibility graph of the polygon. Our proof also guarantees that the selected vertex guards also cover the entire outer boundary (outer perimeter of the polygon) as well. 
\end{abstract}

\section{Introduction}

A set $S$ of points is said to guard a polygon if, for every point $p$ in the polygon, there is some $q\in S$ such that the line segment between $p$ and $q$ is inside the polygon.

The art gallery problem asks for the minimum number of guards that are sufficient to guard any polygon with $n$ vertices.
There are numerous variations of the original problem that are also referred to as the art gallery problem. In some versions guards are restricted to the perimeter, or even to the vertices of the polygon which are called vertex guards. Some versions require only the perimeter or a subset of the perimeter to be guarded.
The version in which guards must be placed on vertices and only vertices need to be guarded is equivalent to the minimum dominating set problem for the visibility graph of the polygon.

 In graph theory, for a given graph $G$ with vertex set $V(G)$,  $U\subseteq V(G)$ is a dominating set for $G$ if  every vertex $v \in V(G)\backslash U$ has a neighbor in $U$. Minimum dominating set problem is to find a dominating set $V^*\subseteq V(G)$ such that the size of $V^*$ (denoted by $|V^*|$) is minimum among all dominating sets. 

\subsection*{Related results}
Chv\'{a}tal's art gallery theorem \cite{cha} states that $\lfloor \frac{n}{3}\rfloor$ vertex guards are always sufficient and sometimes necessary to guard a simple polygon with $n$ vertices. Later, Fisk \cite{fisk} gave a short proof for Chv\'{a}tal's art gallery theorem.

O'Rourke \cite{or} proved that any polygon $P$ with $n$ vertices and $h$ holes can be guarded with at most $\lfloor \frac{n+2h}{3}\rfloor$  vertex guards. Note that $n$ is the total number of vertices of the polygon including the boundary and holes.
But Shermer conjectured that any polygon $P$ with $n$ vertices and $h$ holes can always be guarded with $\lfloor \frac{n+h}{3}\rfloor$ vertex guards.
This conjecture has been proved by Shermer \cite{sher} and Aggarwal \cite{aga} independently for $h=1$. For $h>1$, the conjecture is still open for more than 35 years.
However Hoffmann, Kaufmann and Kriegel in  \cite{hof} and  Bjorling-Sachs and Souvaine in \cite{bj} proved Shermer's conjecture for point guards (i.e. the guards can be chosen from any points inside or on the boundary of the polygon).

\subsection*{Our result}

In this paper,  we prove that every polygon with holes has a special kind of triangulation to be specified shortly. Next by using this theorem, we prove that  $\lfloor \frac{n+h}{3}\rfloor$ vertex guards are sufficient to guard the boundary of a polygon with $n$ vertices and $h$ holes.
By boundary of $P$ we mean the outer perimeter of $P$.
 As far as we know this version has not been studied. 

\section{Special triangulation}
In this section, we present some basic definitions and a theorem in order to prove our main result.
It has been proved that every polygon with (or without) holes can be triangulated and this triangulation is not always unique.
\begin{definition}
In a given polygon $P$ with $h$ holes, a triangle $\Delta$ in a triangulation of $P$ is called a special triangle if one of its edges is an edge of a hole and the apex vertex is a vertex of the polygon not on that hole (see Figure \ref{poly1}). 
\end{definition}

\begin{figure}[htpb]
\centering
\begin{tikzpicture}[scale=0.6]
\draw (-2,0) node[right] {\scriptsize$P$};

\draw(0,0)--(5,2);
\draw(5,2)--(7,0);
\draw(7,0)--(9,1);
\draw(9,1)--(10,-3);
\draw(10,-3)--(5,-3);
\draw(5,-3)--(3,-6);
\draw(3,-6)--(0,-4);
\draw(0,-4)--(1,-2);
\draw(1,-2)--(0,0);

\draw(4,-0.5)--(6,-0.5);
\draw(6,-0.5)--(5,-2);
\draw(5,-2)--(4,-0.5);
\draw (5,-1) node {\scriptsize$h_1$};

\draw(1,-4)--(4,-4);
\draw(4,-4)--(2,-5);
\draw(2,-5)--(1,-4);
\draw (2.33,-4.3) node {\scriptsize$h_2$};

\draw (1,-4) node[above] {\scriptsize$e$};
\draw (4,-4) node[above] {\scriptsize$f$};

\draw(5,2)--(4,-0.5)[dashed];
\draw(5,2)--(6,-0.5)[dashed];
\draw (5,0) node {\scriptsize$\Delta$};

\draw (5,2) node[above] {\scriptsize$a$};
\draw (4,-0.5) node[left] {\scriptsize$b$};
\draw (6,-0.5) node[right] {\scriptsize$c$};

\draw(5,-2)node[below]{\scriptsize$d$};

\draw(2.5,-3.5)node[above]{\scriptsize$\Delta'$};

\draw(4,-0.5)--(1,-4)[dashed];
\draw(4,-0.5)--(4,-4)[dashed];

\end{tikzpicture}

\caption{A polygon with $2$ holes. In this example, $\Delta$ and $\Delta'$ are special triangles. 
}
\label{poly1}
\end{figure}
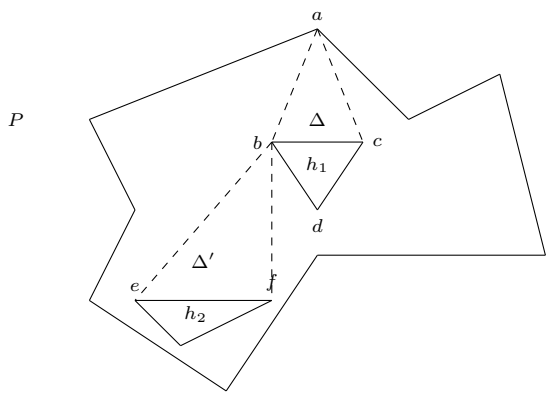
\begin{theorem}
\label{special}
Every polygon with holes has a triangulation with a special triangle.
\end{theorem}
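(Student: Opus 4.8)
The plan is to prove that among all triangulations of $P$, at least one contains a special triangle; in fact I would aim to show something stronger, namely that \emph{every} triangulation of a polygon with at least one hole contains a special triangle. Fix any triangulation $T$ of $P$ and fix any hole $H$. The key object to focus on is an edge $uv$ of $H$ together with the unique triangle $\Delta_{uv}$ of $T$ that lies on the \emph{outer} side of $uv$ (the side facing the interior of $P$ rather than the interior of the hole). Let $w$ be the apex of $\Delta_{uv}$, i.e.\ its third vertex. If for some edge $uv$ of $H$ this apex $w$ is a vertex of $P$ that does \emph{not} lie on $H$, then $\Delta_{uv}$ is by definition a special triangle and we are done. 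So the whole difficulty reduces to ruling out the bad case in which, for \emph{every} edge of $H$, the apex of its outward triangle is again a vertex of $H$.

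First I would set up the contradiction by assuming this bad case holds for the hole $H$. Under this assumption, every triangle incident to an edge of $H$ has all three of its vertices on $H$. The natural tool here is a counting or extremal argument. I would consider the polygon $P$ together with $H$ and look at the region ``just outside'' $H$. Intuitively, if the apex of every edge-triangle of $H$ is also on $H$, then $H$ can only ``see'' itself through these triangles, and the triangulation never connects $H$ to the rest of $P$ through an edge of $H$; one must then argue that the connectivity of the triangulation (which triangulates a connected region) forces a diagonal from $H$ to a vertex off $H$, and that such a diagonal participates in a triangle whose base can be taken on $H$. A clean way to formalize this is to pick the hole $H$ whose a chosen extreme vertex (say the one with the lowest $y$-coordinate among all vertices of all holes, breaking ties by $x$) is extremal, and examine the triangle below its two incident edges.

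A cleaner route that I expect to work more smoothly is an extremal/sweep argument. I would select, over all vertices lying on holes, a vertex $p$ that is extremal in some direction, say the hole vertex minimizing $y$; let $H$ be its hole. Consider the two edges of $H$ incident to $p$ and the triangle(s) of $T$ immediately above $p$ inside $P$. Because $p$ is the lowest hole vertex, the part of $P$ locally below or around $p$ near the diagonals emanating from $p$ must be filled by triangles whose apexes I can track. I would argue that following the fan of triangles around $p$ inside $P$, one eventually reaches a triangle whose apex is a vertex of $P$ not on $H$: since $P$ is connected and has an outer boundary that $H$ does not touch, not all vertices visible from the neighborhood of $p$ can lie on $H$. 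Pinning down that this reachable off-$H$ apex occurs in a triangle whose opposite edge is an edge of $H$ is the step that turns the reachability statement into a genuine special triangle.

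The hard part will be precisely this last implication: converting ``there exists a triangle with a vertex off $H$ incident to $p$'' into ``there exists a triangle with an \emph{edge} of $H$ and an apex off $H$.'' The apex-off-$H$ triangle produced by the fan argument might share only the single vertex $p$ with $H$, not a full edge. To fix this I would walk along the boundary of $H$: starting from an edge whose outward triangle has an on-$H$ apex and moving toward a vertex that is joined by a diagonal to an off-$H$ vertex, there must be a first edge of $H$ at which the outward apex switches from on-$H$ to off-$H$, and I would argue that edge's outward triangle is special. Making this ``first switch'' argument rigorous, and ruling out degenerate configurations where diagonals from $H$ all land back on $H$ yet still triangulate the full connected region, is where I expect the real work to lie; the counting identity relating the number of hole edges, diagonals, and triangles incident to $H$ should provide the contradiction that closes the bad case.
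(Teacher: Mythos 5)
Your opening move---fixing an \emph{arbitrary} triangulation $T$ and trying to show it must already contain a special triangle---rests on a strengthening of the theorem that is false, so the contradiction you hope to reach in the ``bad case'' does not exist. Concretely, let $P$ be a large square whose single hole is a five-pointed star with point vertices $p_1,\dots,p_5$ and notch vertices $q_1,\dots,q_5$ in cyclic order $p_1q_1p_2q_2\cdots$ (indices mod $5$). The five ``gap'' triangles $p_iq_ip_{i+1}$ lie inside $P$ (outside the star) and together absorb all ten edges of the hole; completing them with any triangulation of the square minus the convex pentagon $p_1\cdots p_5$ gives a valid triangulation of $P$ in which every triangle having an edge on the hole has all three vertices on that hole. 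So the bad case you set out to rule out genuinely occurs, and your ``first switch'' step fails exactly where you feared: the fan around an extremal hole vertex does reach triangles with off-hole vertices, but those triangles meet the hole only at single vertices, never along an edge, so there is no edge of $H$ at which the outward apex switches off $H$. No counting identity relating hole edges, diagonals, and incident triangles can close the argument, because the configuration is realizable.

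Since the theorem only asserts that \emph{some} triangulation contains a special triangle, the proof must construct a suitable triangle (and then a triangulation around it) rather than analyze an arbitrary triangulation. That is what the paper does: it reduces to a single hole by induction (cutting $P$ open along a triangulation edge joining a hole vertex to the boundary), reduces further to the case where non-adjacent boundary vertices cannot see each other, and then performs a rotational visibility sweep from a boundary vertex along an edge of the hole; if no hole edge were entirely visible from an off-hole vertex, the successive blocking vertices would force all interior angles along the outer boundary to be less than $\pi$ all the way around, a contradiction. To salvage your approach you would need a step that \emph{modifies} a given triangulation (for instance, flipping the diagonal $p_ip_{i+1}$ in the example above) to create a special triangle, which is a genuinely different argument from the one you outline.
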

Note that according to the definition of special triangle, one edge of a special triangle is always on a hole and its apex vertex is on the boundary or on a different hole.
In the following we explain the proof of Theorem \ref{special}.

\subsection{Proof of Theorem \ref{special}}
Consider a triangulation of our polygon. If the number of holes is bigger than one, then there must exists an edge of this triangulation from a vertex of one of these holes to a vertex of the boundary of the polygon. If we make this edge into two parallel edges of very small distance from each other, we can make this hole into a part of the boundary, hence reduce the number of holes. A special triangle for this reduced polygon is also a special triangle for the original polygon. By induction on the number of holes, therefore it is enough to consider only the case when we have a polygon with one hole. Let us assume that this polygon with one hole has $n$ vertices, and the existence of a special triangle for a polygon with one hole is proved for the case when the number of vertices is less than $n$. Note that the smallest possible $n$ is $6$ that happens when we have a triangle with a triangle as hole inside. Any triangulation for this particular polygon has a special triangle and in fact 3 special triangles. With this assumption, we can assume that non-adjacent vertices of the boundary can not see each other. Since if they do, the chord connecting them divide the boundary into two smaller parts where the hole is inside one of them. A special triangle for this smaller instance of a polygon with one hole, is also a special triangle for the original polygon. Hence a triangulation for the polygon does not have a triangle whose vertices are all on the boundary. This implies that any vertex of the boundary must see at least one vertex of the hole. Assume that we do not have a special triangle, we want to reach to a contradiction.

Let $B_1$ be a vertex of the boundary. According to the previous argument, it will see a vertex on the hole, say $H_1$. Without loss of generality, we may assume that the ray $B_1H_1$ in a counter clockwise rotational sweep, sees part of the edge $H_1H_1'$ of the hole.

Since a special triangle does not exist, this rotating ray will hit an obstacle that prevents it from seeing the entire edge $H_1H_1'$. Let $H_1''$ be the point on the edge $H_1H_1'$  obtained by rotating this ray until it hits an obstacle.
This obstacle is either a vertex from the boundary or a vertex from the hole. Assume that it is a vertex  $B_2$ from the boundary (we will discuss the second possibility shortly). Since the vertices on the boundary do not see non-adjacent vertices on the boundary, $B_2$ must be adjacent to $B_1$. Then $B_2$ also sees the portion $H_1H_1''$ of $H_1H_1'$ and even more. Repeating the sweeping procedure with the ray $B_2H_1''$, we will hit another obstacle, unless $B_2H_1H_1'$ is a special triangle. The sequence of obstacles obtained this way can not be all the vertices of the boundary, because they keep seeing larger and larger portions of the edge $H_1H_1'$ and hence they are different and we only have a finite number of vertices of the boundary.  So we will reach a vertex $H_2$ of the hole after say $k_1\ge 1$ steps. The vertices $B_1,\dots, B_{k_1}$ are consecutive vertices on the boundary and the angles $B_{i-1}B_iB_{i+1}$ are all less than $\pi$, since they are obtained by counter clockwise sweeps. The vertex $B_{k_1}$ will see a portion of the edge of the hole with the end-point $H_2$, say $H_2H_2'$. Since both of the edges of the hole with end-point $H_2$ are to the right of the ray $B_{k_1}H_2$, therefore we still need to rotate this ray counter clockwise along the edge $H_2H_2'$ and hence if we hit a boundary vertex obstacle, say $B_{k_1+1}$, the angle $B_{k_1-1}B_{k_1}B_{k_1+1}$ is less than $\pi$.

Notice that the obstacles encountered for a vertex of the boundary $B$ by rotating its corresponding ray counter-clockwise, can not all be among the vertices of the hole. In this case the rotating ray will make a full rotation of $2\pi$, and this is impossible since the maximum rotational angle that this ray can have is less than the angle of the vertex $B$, which is definitely less than $2\pi$ degrees.

Now since we are assuming that no special triangle exists, the process of sweeping rays and hitting obstacles will be continued forever. Producing a sequence of vertices of the boundary and holes.  Since after reaching a boundary vertex, we can not get only vertices from the hole, the sequence of boundary vertices that are consecutive vertices of the polygon must come back to the starting point. Hence we  go along all the vertices of the boundary, however the outer angles $B_{i-1}B_iB_{i+1}$ are all less than $\pi$. This is a contradiction.

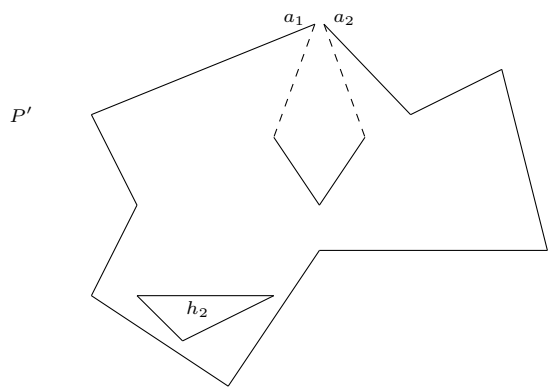
\begin{figure}[htpb]
\centering
\begin{tikzpicture}[scale=0.6]

\draw (-2,0) node[right] {\scriptsize$P'$};

\draw(0,0)--(4.9,2);

\draw(5.1,2)--(7,0);

\draw(7,0)--(9,1);
\draw(9,1)--(10,-3);
\draw(10,-3)--(5,-3);
\draw(5,-3)--(3,-6);
\draw(3,-6)--(0,-4);
\draw(0,-4)--(1,-2);
\draw(1,-2)--(0,0);

\draw(6,-0.5)--(5,-2);
\draw(5,-2)--(4,-0.5);

\draw(1,-4)--(4,-4);
\draw(4,-4)--(2,-5);
\draw(2,-5)--(1,-4);
\draw (2.33,-4.3) node {\scriptsize$h_2$};

\draw(4.9,2)--(4,-0.5)[dashed];
\draw(5.1,2)--(6,-0.5)[dashed];

\draw (4.9,2.1) node[left] {\scriptsize$a_1$};
\draw (5.1,2.1) node[right] {\scriptsize$a_2$};

\end{tikzpicture}

\caption{We split the vertex $a$ into $2$ vertices $a_1$ and $a_2$. Now the polygon has $1$($h-1$) hole and $16$($n+1$) vertices.}
\label{poly2}
\end{figure}

\section{Guarding vertices with vertex guards}
Now as a result of Theorem \ref{special}, we present our main theorem.

\begin{theorem}
\label{main}
For a given polygon $P$ with $n$ vertices and $h$ holes, $\lfloor \frac{n+h}{3}\rfloor$ vertex guards are always sufficient to guard the vertices of $P$
and also the entire boundary.
\end{theorem}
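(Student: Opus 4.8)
The plan is to use Theorem~\ref{special} as the engine of an induction argument, following the classical Fisk-style 3-coloring strategy but adapted to exploit the special triangle to ``save'' a guard for each hole. First I would set up the induction on the number of holes $h$. The base case $h=0$ is exactly Chv\'atal's art gallery theorem (via Fisk's 3-coloring of a triangulation): any triangulation of a simple polygon is 3-colorable, the smallest color class has at most $\lfloor n/3 \rfloor$ vertices, and placing guards on that class guards the whole polygon, in particular its boundary. So it suffices to show that each hole lets us increase the vertex budget by only $1$ rather than by the naive $2$ that O'Rourke's bound would charge.

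For the inductive step, suppose $P$ has $n$ vertices and $h\ge 1$ holes. By Theorem~\ref{special}, $P$ admits a triangulation containing a special triangle $\Delta = abc$, where $bc$ is an edge of some hole $h_i$ and the apex $a$ lies on the boundary or on a different hole. The key idea is to use this special triangle to merge the hole $h_i$ into the rest of the polygon by a cut along $\Delta$: conceptually I would split the diagonals $ab$ and $ac$ into pairs of nearly-coincident edges (as in Figure~\ref{poly2}), thereby absorbing the hole $h_i$ and producing a polygon $P'$ with $h-1$ holes. The crucial bookkeeping step is to verify that this operation increases the vertex count by exactly a controlled amount: cutting along the two diagonals duplicates the apex $a$ into $a_1,a_2$, so $P'$ has $n+1$ vertices and $h-1$ holes. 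By the induction hypothesis, $\lfloor \frac{(n+1)+(h-1)}{3}\rfloor = \lfloor \frac{n+h}{3}\rfloor$ vertex guards suffice to guard the vertices and boundary of $P'$.

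It then remains to transfer the guarding of $P'$ back to $P$. A guard set for $P'$ is placed on vertices of $P'$, each of which corresponds to a vertex of $P$ (with $a_1,a_2$ both mapping back to $a$). I would argue that every vertex of $P$ that was guarded in $P'$ remains guarded in $P$, because opening up the two nearly-coincident edges does not destroy any visibility that existed in $P'$; visibilities present in $P'$ persist in $P$ since $P \subseteq P'$ as regions (the hole $h_i$ in $P$ is a subset of the ``slit'' region of $P'$). The one subtlety is the boundary of the hole $h_i$ itself, whose edges $bc$ and neighbors must still be covered: here I would lean on the special triangle, whose apex $a$ (or one of its split copies, which maps to $a$) together with $b,c$ witnesses visibility of the hole edge $bc$, and use the fact that the triangulation's 3-coloring can be arranged so that the saved guard count absorbs the newly exposed hole boundary.

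The hard part, and the step I expect to be the main obstacle, is precisely this last transfer: ensuring that the guards chosen for $P'$ cover not only the vertices of $P$ but also the \emph{entire outer boundary} of $P$, including the portions of the hole boundary of $h_i$ that became interior boundary in $P'$. I would need to check carefully that splitting the apex and re-opening the slit never leaves a boundary point unguarded, and that the count $\lfloor \frac{n+h}{3}\rfloor$ is genuinely preserved rather than merely nonincreasing — in particular, that the floor function interacts correctly with the $+1$ vertex/$-1$ hole trade-off at every inductive step. Handling the case where the special-triangle apex $a$ lies on a \emph{different} hole (rather than the boundary) is a second delicate point, since then the merge identifies two holes in a way that must be tracked so the vertex and hole counts still balance.
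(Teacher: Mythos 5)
Your proposal is correct and follows essentially the same route as the paper: induction on $h$ with Chv\'atal's theorem as the base case, splitting the apex $a$ of the special triangle into $a_1,a_2$ to obtain $P'$ with $n+1$ vertices and $h-1$ holes, and transferring the $\lfloor\frac{(n+1)+(h-1)}{3}\rfloor=\lfloor\frac{n+h}{3}\rfloor$ guards back to $P$. Two small notes: the containment should read $P'\subseteq P$ (not $P\subseteq P'$), which is exactly why visibilities in $P'$ persist in $P$; and the difficulty you flag about covering the hole boundary of $h_i$ does not arise, since the theorem's ``entire boundary'' means only the outer perimeter (the paper's remark concedes that up to $h$ segments on the holes may remain unguarded).
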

\begin{proof}
The proof is by induction on the number of holes.
Chv\'{a}tal's theorem implies that when  $h=0$, $\lfloor \frac{n}{3}\rfloor$ vertex guards are sufficient to guard the entire polygon.
Suppose that the theorem is proved for $h-1$ holes. Now suppose that we are given a polygon $P$ with $n$ vertices and $h$ holes. According to Theorem \ref{special}, there is a triangulation with a special triangle $\Delta$. Suppose that $\Delta$ has a vertex $a$, on the boundary or a hole and another edge, $bc$ on some other hole.
We split $a$ into two vertices $a_1$ and $a_2$. So, $P$ is changed into a polygon $P'$ with $h-1$ holes and $n+1$ vertices (See Figure \ref{poly2}). According to the induction assumption, we can choose $\lfloor \frac{n+1+h-1}{3}\rfloor$ vertex guards that guard the vertices of polygon $P'$ and the boundary of $P'$. Since $a_1$ and $a_2$, are guarded in $P'$, then all vertices of $P$ are guarded by at most $\lfloor \frac{n+h}{3}\rfloor$ vertex guards of $P$. Also by induction the boundary of $P'$ is guarded. On the other hand the  boundary of $P$ is a subset of the boundary of $P'$, so the  boundary of $P$ is guarded too. Note that if we have two vertex guards on $a_1$ and $a_2$, in $P$ they are combined into one vertex guard.
\end{proof}
\begin{remark}
Note that in fact this proof, gives something slightly more. The guards will cover not only the entire outer perimeter of the polygon, but also the perimeter of holes with an exception of at most $h$ segments on them. The reason is that the segment that is the base of the special triangle that was used in the proof above is not necessarily guarded, so a similar induction on the number of holes will prove our claim.
\end{remark}

\section{Concluding remarks}
In this paper, we have proved a theorem similar to the Shermer's conjecture for a special case of the Art Gallery problem where we only need to guard the vertices of the polygon. The proof is based on the existence of a special triangle in the polygon. 
The proposed  algorithm is simple and easy to implement. In future work one can possibly extend this idea for the general case. 

Also for a given connected graph $G$, it has been proved that the size of minimum dominating set of $G$ is $\leq \dfrac{n}{2}$. So if we construct a polygon with minimum number of holes such that its visibility graph is isomorphic to $G$, our proof gives an upper bound of $\lfloor \frac{n+h}{3}\rfloor$ for the size of minimum dominating set of $G$.



\bibliography{visibility}

\begin{thebibliography}{1}

\bibitem{bj}
Iliana Bjorling{-}Sachs and Diane~L. Souvaine.
\newblock An efficient algorithm for guard placement in polygons with holes.
\newblock {\em Discrete {\&} Computational Geometry}, 13:77--109, 1995.
\newblock URL: \url{https://doi.org/10.1007/BF02574029}, \href
  {http://dx.doi.org/10.1007/BF02574029} {\path{doi:10.1007/BF02574029}}.

\bibitem{cha}
Vasek Chvatal.
\newblock A combinatorial theorem in plane geometry.
\newblock In {\em Journal of Combinatorial Theory, Series B}, volume~18, pages
  39--41.

\bibitem{fisk}
Steve Fisk.
\newblock A short proof of chv{\'{a}}tal's watchman theorem.
\newblock {\em J. Comb. Theory, Ser. {B}}, 24(3):374, 1978.
\newblock URL: \url{https://doi.org/10.1016/0095-8956(78)90059-X}, \href
  {http://dx.doi.org/10.1016/0095-8956(78)90059-X}
  {\path{doi:10.1016/0095-8956(78)90059-X}}.

\bibitem{hof}
Frank Hoffmann, Michael Kaufmann, and Klaus Kriegel.
\newblock The art gallery theorem for polygons with holes.
\newblock In {\em 32nd Annual Symposium on Foundations of Computer Science, San
  Juan, Puerto Rico, 1-4 October 1991}, pages 39--48, 1991.
\newblock URL: \url{https://doi.org/10.1109/SFCS.1991.185346}, \href
  {http://dx.doi.org/10.1109/SFCS.1991.185346}
  {\path{doi:10.1109/SFCS.1991.185346}}.

\bibitem{aga}
Joseph O'Rourke.
\newblock {\em The art gallery theorem: its variations, applications, and
  algorithmic aspects}.
\newblock Ph.D. thesis, Johns Hopkins Univ, 1984.

\bibitem{or}
Joseph O'Rourke.
\newblock {\em Art Gallery Theorems and Algorithms}.
\newblock Oxford University Press, Inc., New York, NY, USA, 1987.

\bibitem{sher}
T~Shermer.
\newblock Triangulation graphs that require extra guards.
\newblock {\em NYIT Computer Graphics Technical Report}, 3D-13, 1984.

\end{thebibliography}

\end{document}